\def\BibTeX{{\rm B\kern-.05em{\sc i\kern-.025em b}\kern-.08em
    T\kern-.1667em\lower.7ex\hbox{E}\kern-.125emX}}
\DeclarePairedDelimiter\ceil{\lceil}{\rceil}
\newtheorem{theorem}{Theorem}
\newtheorem{lemma}[theorem]{Lemma}
\newtheorem{proposition}[theorem]{Proposition}
\newtheorem{conjecture}[theorem]{Conjecture}
\begin{document}

\title{Optimal fermionic swap networks for Hubbard models}

\author{Tobias Hagge\footnote{Pacific Northwest National Laboratory, Richland, WA, USA. Email: tobias.hagge@pnnl.gov.}\;\footnote{Pacific Northwest National Laboratory is operated by Battelle Memorial Institute for the U.S. Department of Energy under Contract No. DE-AC05-76RL01830. This work was supported by the U.S. Department of Energy, under PNNL's QUASAR initiative. This material is based upon work supported by the U.S. Department of Energy, Office of Science, National Quantum Information Science Research Centers, Quantum Science Center.}}

\date{}
\maketitle
% forcing page numbering
\thispagestyle{plain}
\pagestyle{plain}

\begin{abstract}

We propose an efficient variation of the fermionic swap network scheme used to efficiently simulate $n$-dimensional Fermi-Hubbard-model Hamiltonians encoded using the Jordan-Wigner transform. For the two-dimensional versions, we show that our choices minimize swap depth and number of Hamiltonian interaction layers. The proofs, along with the choice of swap network, rely on isoperimetric inequality results from the combinatorics literature, and are closely related to graph bandwidth problems. The machinery has the potential to be extended to maximize swap network efficiency for other types of lattices.
\end{abstract}

\section{Introduction}
\label{sec:intro}

Near-term quantum computing on limited and noisy hardware requires computations of small circuit depth, using few qubits. In this setting, algorithms with low overhead may be preferable to those with the best possible asymptotic scaling. On the other hand, algorithms with qubit connectivity requirements that are a good match for hardware have an advantage over those that require extra circuitry to account for qubit connectivity, as well as those that cannot take advantage of hardware capabilities.

Many problems of near-term interest in quantum chemistry require representation of a second-quantized fermionic Hamiltonian $H$ in terms of qubit operators. Jordan-Wigner encodings are the simplest means of accomplishing this purpose, but require establishing a canonical ordering of the fermion occupancy sites. Aside from a few considerations involving ancilla qubits, such schemes only require, and in some sense can only take advantage of, linear qubit connectivity. Such schemes remain competitive for near-term applications because of their extremely low overhead. The results of this paper highlight that Jordan-Wigner encodings have another advantage, at least when it comes to analysis: linear orders are well-studied and relatively easy to understand. In particular, we show that they can be used to construct circuits for simulation of fermionic lattice-model Hamiltonians which are optimal in a couple of senses.

The nickel version of our scheme, for 2-dimensional spin and spinless Fermi-Hubbard models is as follows: rather than interleave the rows of the model as is shown in \cite{kivlichan2018quantum}, and as is the usual practice, it is more efficient to interleave the diagonals. We prove that this is the best possible network. There are $n$-dimensional generalizations of our swap network, but it is more difficult to prove optimality and local Hamiltonian encoding methods seem likely to be better performers. Our proof is interesting because it can be applied to other lattices and because there is an interesting generalization which could take advantage of better-than-linear qubit connectivity.

\section{Background}
\label{sec:background}

The problem of Hamiltonian simulation was the first compelling quantum computing application \cite{Feynman1982}. Fermionic Hamiltonians, in particular, have important applications in quantum chemistry and materials science. Hamiltonians for Fermionic systems require special computational treatment, because in both first and second quantized formulations, interactions which respect Fermionic symmetries incur additional cost.

The second-quantized formulation has received more attention in the quantum computing literature, and is preferred when the number of Fermions is large relative to the number of occupancy sites. In this formulation, each Fermion occupancy site corresponds to a tensor factor in the state space; spin $\frac{1}{2}$ Fermions can be represented locally in qubit-space as single qubits. Fermionic statistics are preserved by mapping the Fermionic algebra operators to operators in qubit-space via the Jordan-Wigner transform. This fixes an ordering of the occupancy sites and destroys locality; a Fermionic operator on sites $p$ and $q$ becomes a qubit-space operator on qubits $p$ through $q$.

Many second-quantized Hamiltonian evolution methods employ Trotter-Suzuki decomposition in a process colloquially known as {Trotterization}. In this setting, one has a  {\em local $k$-body Hamiltonian}, i.e. a Hamiltonian which is expressible as a sum of polynomially many $k$-local summands, each of which evolves at most $k$ particles via creation and annihilation operators among at most $2k$ sites. The evolution of this Hamiltonian is approximated, on a short time scale, as a product of short-time evolutions of the local Hamiltonian summands. One body Hamiltonians, for example, are of the form:
\[H = \sum_i H_i + \sum_{i,j} H_{i,j}, \mbox{ where}\]
\[H_i = k_i a^\dagger_i a_i,\]
\[H_{i,j} = k_{i,j} a^\dagger_i a_j,\]

with the $k$'s constant and the $a_i$ and $a_i^\dagger$ being fermionic annihilation and creation operators, respectively. Using the first-order Trotter-Suzuki formula, a Trotter step for such a Hamiltonian is of the form:

\[e^{-i H \Delta t} \cong \prod_j e^{-i H_j \Delta t} \times \prod_{j,k} e^{-i H_{j,k} \Delta t}.\]
Computing a Trotter step efficiently necessitates efficient circuits for the short-time evolutions.

In the first published algorithm for fermionic Hamiltonian simulation, due to Abrams and Lloyd \cite{abrams-lloyd-1997}, each $H_{i,j}$ operator on $n$ qubits required $O(n)$ circuit-depth, due to the cost of implementing fermionic opertors, with arbitrary pairwise qubit connectivity. The Bravyi-Kitaev method \cite{bravyi-kitaev} reduces the asymptotic circuit-depth to $O(\log n)$ by introducing linearly-many ancilla qubits to track the fermionic operator sign sums over contiguous ranges of qubits, and is more efficient in practice \cite{seeley2012bravyi-kitaev}. The cost analysis states no connectivity restrictions, and in practice requires $O(1)$ two-qubit (bosonic) operations over a binary tree on the ancilla qubits.

Several methods decrease asymptotic circuit complexity by attempting to improve on qubit connectivity, either by replacing the Hamiltonian with a local Hamiltonian with the same ground state (the Verstraete-Cirac-Ball method \cite{Verstraete_2005} \cite{ball}), or by directly encoding enough of the fermionic operator algebra to represent the Hamiltonian's one-body interactions (e.g. the Bravyi-Kitaev Superfast Method \cite{bravyi-kitaev} and it's Generalized Superfast Encodings variant \cite{gse}, the Auxilliary Qubit method, and compact fermion to qubit encodings \cite{compact}). These methods make one-body interactions local at the cost of some qubit and circuit-complexity overhead, assuming that the qubit connectivity supports the locality of these operations.

The fermionic swap method \cite{kivlichan2018quantum} is an attempt to make the Jordan-Wigner encoding as cheap as possible by amortizing the cost of the interaction operators. It works by efficiently transposing qubits adjacent in the Jordan-Wigner ordering so that each interacting pair of qubits, or for multi-body Hamiltonians, each interacting tuple of qubits, is brought to adjacency by fermionic swap operators. These are swap operators modified to preserve Fermi exchange statistics; once brought to adjacency the local fermionic operators become local qubit operators and may be performed in constant time. Fermionic swap operators have low overhead, do not require auxilliary qubits, and the linear qubit connectivity requirements are quite modest.

The cost of the Hamiltonian interaction layers depends greatly on the underlying hardware. If the hardware can perform arbitrary one-qubit gates, it may be possible to absorb some or all of the cost of interactions into the swap layers. Otherwise, while the Kitaev-Solovay theorem \cite{Kitaev_1997} guarantees that accurate gates can be obtained cheaply, interaction layers will usually be expensive compared to swap layers.

For $n$-occupancy-site one-body fermionic Hamiltonians which are dense (i.e. all $k_{i,j} \ne 0$), the required $n(n-1)$ pairwise interactions (among $\frac{n(n-1)}{2}$ interaction pairs) are accomplished with $n-2$ layers of swaps and $n$ interaction layers.\footnote{As presented in \cite{kivlichan2018quantum}, each qubit is swapped with every other qubit, using $n$ swap layers. However, qubits need only be made adjacent; inspection shows that the $n$-th swap layer is unnecessary, as the $n-1$st since the interactions in the final layer are available prior to the start of swapping. We treat these trivial optimizations as part of the original method.} The authors conjecture that no other method can perform a Trotter step for such Hamiltonians using fewer entangling gates.

To motivate what follows, the following are the fermionic swap method's optimality properties with respect to the number of swap layers (swap depth) and Hamiltonian interaction layers (Hamiltonian interaction depth) for the fermionic swap method for dense one-body Hamiltonians:
\begin{lemma}
For dense one-body Hamiltonians, the swapping scheme of \cite{kivlichan2018quantum} has optimal swap depth. For $n$ odd, it has optimal interaction depth. For $n$ even, $n>2$, the interaction depth is one greater than optimal; swap-depth and interaction-depth optimality may not both be achieved simultaneously.
\end{lemma}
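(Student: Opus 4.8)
The plan is to treat the two quantities separately, with matching upper and lower bounds for each. I model a \emph{swap layer} as a product of disjoint adjacent transpositions acting on the positions of the Jordan--Wigner line, and an \emph{interaction layer} as a set of vertex-disjoint pairs that are adjacent in the current arrangement, i.e.\ a matching in the path $P_n$ on the current positions. Both upper bounds I would obtain by direct inspection of the odd--even transposition (brick-wall) network: tracking the world-line of each qubit, every pair is brought to adjacency exactly once, just before the two world-lines cross, so the network realizes all $\binom{n}{2}$ pairs, and I would read off that the last pair becomes adjacent after $n-2$ swap layers while the crossings are organized into $n$ interaction layers. This reproduces the counts asserted for the scheme of \cite{kivlichan2018quantum}.

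For the interaction-depth lower bound I would use a counting argument. A matching in $P_n$ has at most $\lfloor n/2\rfloor$ edges, and every one of the $\binom{n}{2}$ interaction pairs must lie in some interaction layer while it is adjacent, so the interaction depth is at least $\lceil \binom{n}{2}/\lfloor n/2\rfloor\rceil$. For $n$ odd this equals $n$, matching the scheme and proving interaction-depth optimality in the odd case; for $n$ even it equals $n-1$, one below the scheme's value.

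The even case then splits in two. First, to see that $n-1$ is genuinely optimal rather than merely a lower bound, I would exhibit a network meeting it: a $1$-factorization of $K_n$ into $n-1$ perfect matchings, realized so that at each interaction layer the active factor is presented as the unique perfect matching of the current line, namely the aligned brick $\{(1,2),(3,4),\dots,(n-1,n)\}$. Second, for the non-simultaneity claim I would invoke the equality case of the counting bound: an interaction depth of $n-1$ forces every interaction layer to be a full perfect matching and forces the $n-1$ layers to partition the edges of $K_n$, i.e.\ to form a $1$-factorization presented in aligned brick position. I would then study the swaps needed to pass between consecutive aligned factors $M_i$ and $M_{i+1}$ via the cycle structure of $M_i\cup M_{i+1}$, and argue that the total number of swap layers this forces strictly exceeds $n-2$, so optimal swap depth and optimal interaction depth cannot hold together.

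The hard part is the swap-depth lower bound, namely that no network makes all pairs adjacent in fewer than $n-2$ swap layers. The naive count---bounding the number of newly created adjacencies per layer---only gives roughly $3n/4$, since one layer can introduce up to $\lfloor 2(n-1)/3\rfloor$ new pairs; worked small cases show this maximal rate cannot be sustained, so the bound is not tight. To close the gap I would instead work on the space-time diagram of the world-lines in the strip $\{1,\dots,n\}\times\{0,\dots,T\}$ and bound the number of realizable adjacencies by a vertex-isoperimetric/boundary argument, the one-dimensional shadow of the bandwidth and isoperimetric machinery used for the two-dimensional theorems. Establishing that this sharpens the estimate all the way to $n-2$, rather than the weaker counting value, is where I expect the real difficulty to lie.
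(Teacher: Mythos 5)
Your interaction-depth analysis is sound: the counting bound $\lceil \binom{n}{2}/\lfloor n/2\rfloor\rceil$ gives $n$ for odd $n$ and $n-1$ for even $n$ in one stroke (the paper reaches the same numbers slightly more laboriously, via the degree bound plus a parity argument), and realizing $n-1$ layers for even $n$ via a $1$-factorization of $K_n$ matches the paper's appeal to Baranyai's theorem. The genuine gap is the swap-depth lower bound, which you explicitly leave open; it needs no isoperimetric machinery, because there is a two-line argument. Follow the qubit $q$ that is least in the initial order. Two qubits can change their relative order only at a step where they themselves are swapped, which requires them to be adjacent; hence at the moment $q$ first becomes adjacent to any other qubit $w$, $q$ still lies below $w$. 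So every one of $q$'s $n-1$ first meetings happens with the other qubit sitting in the slot immediately above $q$. The identity of that upper neighbor changes at most once per swap layer, and exactly one qubit occupies it before any swapping, so $T$ swap layers allow $q$ at most $1+T$ distinct upper neighbors; $1+T\ge n-1$ gives $T\ge n-2$. Your space-time picture is the right instinct, but the key is to track one extreme world line rather than to count new adjacencies globally (which, as you correctly observe, stalls near $3n/4$).

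The non-simultaneity step is also only a sketch, and the route you propose fails as stated: a transition between consecutive aligned factors never forces more than one swap layer, since a single layer such as the shifted brick $(1,2),(3,4),\dots$ turns an aligned perfect matching into one sharing no pair with it. Summing per-transition costs therefore can never exceed $n-2$; the obstruction is global, not local. The paper's argument: with $n-1$ interaction layers and $n-2$ swap layers there is exactly one swap layer between consecutive interaction layers, and it must renew the pair in slots $(0,1)$; since slot $0$ has no other neighbor, that layer must contain the transposition of slots $(1,2)$, and hence cannot move the qubit in slot $0$. Consequently that qubit never moves and its partner merely alternates between the two qubits that began in slots $1$ and $2$, giving it at most two distinct partners ever, contradicting the $n-1\ge 3$ partners it needs once $n\ge 4$. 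Your cycle-structure idea can only be salvaged by running it globally along these lines (for $n=4$ the forced layer $\{(1,2)\}$ makes the factors repeat with period two, which is the same contradiction), so the per-transition count should be replaced by this fixed-slot argument.
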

\begin{proof}
Given a swap network for a dense one-body Hamiltonian, follow the least qubit, in the initial ordering, as it swaps with the others. It is lower-ordered than every other qubit at the time it first becomes adjacent to it. Thus the swap depth is at least $n-2$, and \cite{kivlichan2018quantum} has optimal swap depth.

The interaction depth is bounded below by the degree of the Hamiltonian interaction graph, which in this case is $n-1$. To attain this bound, each qubit must interact with another in each of the interaction layers. If $n$ is odd this is impossible. If $n$ is even, by a simple counting argument, each layer contains $\frac n 2$ interactions. This implies that each pair of qubits $(2k, 2k+1)$ must interact within each interaction layer. Suppose additionally that the network were swap-depth optimal. Then each pair of successive interaction layers would be separated by a single swap layer which places new qubit pairs in each of the $(2k, 2k+1)$ slots. Each such sway layer would need to swap the pair $(1,2)$, but $0$ can interact only with $1$ and $2$ in this way. Thus, if $n>2$ and even, no interaction-depth optimal network can be swap-depth optimal.

Attaining interaction depth $n-1$ for $n$ even amounts to constructing an $n-1$-color edge-coloring on the complete graph on $n$ vertices, which is possible by Baranyai's theorem \cite{baranyai}, and assigning the edges of one color to a single interaction layer.

\end{proof}

\newcommand{\swapdepth}[1]{d_{\operatorname{swap}}(#1)}
\section{Optimal swap networks for Hubbard models}

\subsection{Preliminaries}

By the term {\em grid graph}, we mean a finite graph product $G_1 \times \ldots \times G_k$ of path graphs $G_i$. In particular, for $M$ and $N$ positive integers, $M \le N$, the $M \times N$ grid graph $G = (V, E)$  with vertices $V$ and edges $E$ is defined as follows:
\begin{equation}
\begin{array}{lll}
V & = & \{(m,n) | 0 \le m < M, 0 \le n < N\},\\
E_h & = & \{((m,n), (m+1,n)) | 0 \le m < M-1, 0 \le n < N\},\\
E_v & = & \{((m,n), (m,n+1)) | 0 \le m < M, 0 \le n < N-1\},\\
E & = & E_h \cup E_v.
\end{array}
\end{equation}

In the spinless $M \times N$ Hubbard model, there is one occupancy state for each element of $V$, and the Hamiltonian is given by:
\[H = U \sum_{p \in V} n_p - t \sum_{(p,q) \in E} (a_p^\dagger a_q + a_q^\dagger a_p). \]
Here, $a_p$ and $a_p\dagger$ are the $p$th fermionic operator algebra raising and lowering operator, respectively, $n_p$ is the $p$th number operator,$-t$ is the kinetic energy, and $U$ is the Coulomb repulsion.

For the version with spin, each element of $V$ is assigned two possible occupancy states, one for each electron spin. The Hamiltonian is then
\[H_s = U \sum_{p \in V} n_{p, +} n_{p,-} - t \sum_{(p,q) \in E, \sigma \in \pm} (a_{p,\sigma}^\dagger a_{q,\sigma} + a_{q,\sigma}^\dagger a_{p,\sigma}). \]
The graph of two-occupancy-state interactions for this Hamiltonian is the $2 \times M \times N$ grid graph.

The swap network is concerned with the sites which must be brought to adjacency in order to compute the Hamiltonian interaction terms. The form of those terms determines the structure of the Hamiltonian interaction layers.

Swap networks for two-dimensional Hubbard Hamiltonians were considered in \cite{kivlichan2018quantum}; the method therein attains $\frac{3(M-1)}{2}$ swap layers to process a single Trotter step in the spinless case, and $\frac{3(2M-1)}{2}$ layers in the case with spin. The networks in \cite{kivlichan2018quantum} contain $O(M)$ Hamiltonian interaction layers. In practice, the first $\frac{M-1}{2}$ (respectively, $\frac{2M-1}{2}$) layers can be omitted, giving $M-1$ and $2M-1$ layers respectively. Note that there is there is no hope of obtaining the minimum number of entangling gates, for this or any other swap network, as the local Hamiltonian methods have better scaling.

The degree of a graph is a lower bound on the number of Hamiltonian interaction layers in a swap network. This gives lower interaction-depth bounds of $4$ and $5$  for two-dimensional spinless and spin Hubbard models respectively.

\subsection{Optimal swap depths}

Given a set $V$, let $R(V)$ denote the set of all orders of $V$, that is the set of bijective functions $V \to \{0, \ldots, |V| - 1\}$. Given an order $r$ of $V$, let $r_k = r^{-1}(\{0, \ldots, k-1\})$, the $k$-th initial segment of $r$.

Given
a graph $G = (V,E)$ , let $\swapdepth{G}$ be the minimum swap depth over all linear-connectivity swap networks for $G$, and define the {\em graph bandwidth} $b(G)$ of $G$ as
\[b(G) = \min_{r \in R(V)} \max_{(v,w) \in E} |r(v) - r(w)|.\]

Graph bandwidth determines a simple lower bound on swap depth:
\begin{equation}\label{bandwidthbound}
\swapdepth{G} \ge \ceil{\frac{b(G)-1}{2}},
\end{equation}
since for any $r \in R(V)$, $G$ has an edge requiring at least $\ceil{\frac{b(G)-1}{2}}$ overlapping swaps (and thus, swap layers) to bring the vertices to adjacency.

In general, computing graph bandwidth is an NP-hard problem. For $G$ an $M \times N$ grid graph, with $M \le N$, it is shown in \cite{Chvatalova1975249} that $b(G) = M$, giving $\swapdepth{G} \ge r\ceil{\frac{M-1}{2}}$. For $G$ an $2 \times M \times N$ grid graph, with $2 \le M \le N$, it follows from \cite{wangwang77} that $b(G) = 2 M - 1$, giving  $\swapdepth{G} \ge M - 1$ .

The graph bandwidth can also be expressed as
\[b(G) = \min_{r \in R(V)}\max_{e \in E} b_r(e),\]
where for any $W \subset V$, the {\em order bandwidth} $b_r(W)$ of $W$ under $r$ is given by
\[b_r(W) = \max_{w \in W} r(w) - \min_{w \in W} r(w)\]

We generalize this slightly; let $L_2(G)$ be the collection of all subgraphs $S = (V_S, E_S)$ of $G$ which are length-two line graphs. Define the {\em graph 2-bandwidth} $b^2(G)$ as follows:
\[b^2(G) = \min_{r \in R(V)} \max_{S \in L_2(G)} b_r(V_S).\]

\begin{lemma}\label{twobandwidthbound}
\[\swapdepth{G} \ge \ceil{\frac{b^2(G) - 2}{2}}.\]
\end{lemma}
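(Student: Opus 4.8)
The plan is to reduce the statement to a claim about a single length-two line graph and then to track interparticle distances layer by layer. First I would invoke the definition of $b^2(G)$: whatever initial order $r_0$ the swap network begins with, minimality gives $b^2(G) = \min_r \max_{S \in L_2(G)} b_r(V_S) \le \max_{S \in L_2(G)} b_{r_0}(V_S)$, so there is a length-two line graph $S = (\{u,v,w\}, \{uv, vw\})$ with center $v$ whose three vertices have spread $B := b_{r_0}(\{u,v,w\}) \ge b^2(G)$ in the initial arrangement. Since $S$ is a subgraph of $G$, any swap network for $G$ must at some layer $t_1$ bring $u,v$ to adjacency and at some layer $t_2$ bring $v,w$ to adjacency.

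The model I would use is the standard one underlying \eqref{bandwidthbound}: each swap layer is a set of disjoint adjacent transpositions, so every qubit's position changes by at most $1$ per layer, and hence the distance $d_{xy}(t) = |p_x(t) - p_y(t)|$ between any two fixed qubits changes by at most $2$ per layer. Let $\alpha, \beta$ be the two vertices of $S$ lying at the extreme (minimum and maximum) positions at time $0$, so that $d_{\alpha\beta}(0) = B$. The argument then splits on the position of the center $v$.

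In the main case $v$ is the positional middle, so $\{\alpha,\beta\} = \{u,w\}$; assume without loss of generality $t_1 \le t_2$. At time $t_1$ we have $d_{uv}(t_1) = 1$, while the extreme pair has only had $t_1$ layers to close, so $d_{uw}(t_1) \ge B - 2t_1$. The triangle inequality then forces $d_{vw}(t_1) \ge d_{uw}(t_1) - d_{uv}(t_1) \ge B - 2t_1 - 1$. Reducing $d_{vw}$ from this value down to $1$ by time $t_2$, at a rate of at most $2$ per layer, yields $2(t_2 - t_1) \ge B - 2t_1 - 2$, i.e. $t_2 \ge (B-2)/2$; since $t_2$ is an integer, $\swapdepth{G} \ge t_2 \ge \ceil{(B-2)/2} \ge \ceil{(b^2(G)-2)/2}$. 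The remaining case, in which $v$ is itself one of the extremes $\alpha,\beta$, is easier: the opposite extreme is then a path-neighbor of $v$ at initial distance $B$, and simply bringing that pair to adjacency already costs at least $\ceil{(B-1)/2} \ge \ceil{(b^2(G)-2)/2}$ layers.

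The main obstacle is precisely that the two required adjacency events need not occur at the same layer, so one cannot simply argue that the triple's spread must shrink to $2$ at a single moment. The device that overcomes this is the intermediate-time triangle inequality in the middle case: pinning $u,v$ together at $t_1$ lets me transfer the still-large $u$--$w$ gap onto the $v$--$w$ gap, after which the second event inherits almost the full cost. I would also double-check the parity bookkeeping in the ceilings and confirm that swapping the roles of $u$ and $w$ disposes of the case $t_2 < t_1$ by the identical computation.
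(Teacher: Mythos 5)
Your proof is correct and takes essentially the same route as the paper's: identify a worst length-two line subgraph, split into cases according to whether the path's center sits at the middle or at an extreme initial position, and use the fact that the distance between two fixed qubits shrinks by at most $2$ per swap layer. Your intermediate-time triangle-inequality step is simply a rigorous filling-in of the paper's terser justification (that the middle vertex ``cannot be brought nearer to both $v_1$ and $v_3$ by a single swap''), correctly handling the fact that the two adjacency events need not be simultaneous; otherwise the two arguments coincide.
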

\begin{proof}
Given an order $r$ of $V$, and a length-two line-subgraph $S$ with vertices $v_1, v_2, v_3$, $r(\{v_1, v_2, v_3\}) = \{p_1,p_2,p_3\}$ for some nonnegative integers $p_1 < p_2 < p_3$. Then $b_r(S) = p_3 - p_1$. The swap depth cost to bring both edges of $S$ to adjacency (not necessarily simultaneously) is at least $\ceil{\frac{p_3 - p_1 - 2}{2}}$, as if $r(v_2) = p_2$ it cannot be brought nearer to both $v_1$ and $v_3$ by a single swap, and otherwise it requires $\ceil{\frac{p_3 - p_1 - 1}{2}}$ swaps to bring the vertices at $p_1$ and $p_3$ to adjacency. For at least one such $S$, $p_3 - p_1 \ge b^2(G)$.
\end{proof}

As we shall explain, for many parameterized graph families for which exact bandwidth results are known, the bandwidth is provably realized by a vertex order with special boundary-optimality properties.

Given a graph $G = (V, E)$ and $W \subset V$, the {\em vertex boundary} $B_G(W)$ of $W$ in $G$ is the set of vertices in $V - W$ which are adjacent to vertices in $W$. The {\em vertex boundary closure} $C_G(W)$ is $W \cup B_G(W)$.

For any $r \in R(V)$ and any initial segment $r_k$, the highest-ordered element of $B_g(r_k)$ shares an edge with some element of $r_k$. Thus
\[\max_{k' \le |V|, r^{-1}(k'-1) \in B_G(r_k)}k' \ge k + |B_G(r_k)|\]
and therefore
\begin{equation}\label{boundary_bandwidth_inequality}
    b(G) \ge \min_{r \in R(V)} \max_{k \le |V|} |B_G(r_k)|.
\end{equation}

Let $G=(V,E)$ be a graph, $r \in R(V)$. We say $r$ is an {\em isoperimetric order} on $V$ for $G$ if for any other order $r'$ of $V$ and $0 \le k /le |V|$, the $k$-th initial segments $r_k$ and $r'_k$ of $r$ and $r'$ satisfy $|B_G(r_k)| \le |B_G(r'_k)|$. 

It is often possible to construct an isoperimetric order $r$ with the property of {\em initial-segment closure}: for any initial segment $r_k$, $C_G(r_k)$ is also initial segment.
\begin{lemma}\label{lmm:two_bandwidth}
In Inequality \ref{boundary_bandwidth_inequality}, if $r$ is an isoperimetric order with initial-segment closure, equality holds.
\end{lemma}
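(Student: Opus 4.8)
The plan is to show that the isoperimetric order $r$ with initial-segment closure simultaneously realizes both sides of Inequality \ref{boundary_bandwidth_inequality}. First I would dispatch the easy half: because $r$ is isoperimetric, $|B_G(r_k)| \le |B_G(r'_k)|$ for every order $r'$ and every $k$, so $\max_k |B_G(r_k)| \le \max_k |B_G(r'_k)|$ for all $r'$. Hence the minimum on the right-hand side, $\min_{r'} \max_k |B_G(r'_k)|$, is attained at $r$ and equals $\max_k |B_G(r_k)|$. It therefore suffices to prove $b(G) \le \max_k |B_G(r_k)|$, since Inequality \ref{boundary_bandwidth_inequality} already supplies the reverse.

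To bound $b(G)$ from above, I would use $r$ itself as a witness in the definition $b(G) = \min_{r'} \max_{(v,w) \in E} |r'(v) - r'(w)|$, and show every edge has small stretch under $r$. Fix an edge $(v,w)$ with $a := r(v) < r(w) =: b$. The initial segment $r_{a+1}$ has $v$ as its highest-ordered element, and $w \notin r_{a+1}$ is adjacent to $v$, so $w \in B_G(r_{a+1})$. This is where initial-segment closure enters: $C_G(r_{a+1}) = r_{a+1} \cup B_G(r_{a+1})$ is itself an initial segment of size $(a+1) + |B_G(r_{a+1})|$, hence it consists of exactly the vertices of order $0, \ldots, a + |B_G(r_{a+1})|$. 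Since $w$ lies in this initial segment, $b = r(w) \le a + |B_G(r_{a+1})|$, whence $|r(v) - r(w)| = b - a \le |B_G(r_{a+1})| \le \max_k |B_G(r_k)|$.

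Taking the maximum over all edges gives $\max_{(v,w) \in E} |r(v) - r(w)| \le \max_k |B_G(r_k)|$, and therefore $b(G) \le \max_k |B_G(r_k)|$. Combined with Inequality \ref{boundary_bandwidth_inequality} and the first paragraph, this forces equality throughout, which is the claim.

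The only real subtlety, such as it is, lives in the second paragraph: one must select the correct initial segment, namely $r_{a+1}$ whose top element is the \emph{lower} endpoint of the edge, and then use initial-segment closure to upgrade the boundary-membership fact $w \in B_G(r_{a+1})$ into the order bound $r(w) \le a + |B_G(r_{a+1})|$. Without closure, the boundary vertices could be scattered arbitrarily high in the order and no such bound would follow; closure is exactly what pins them to the contiguous block of orders immediately above $r_{a+1}$. The remaining steps are bookkeeping, and I would only pause to note that the boundary edge cases $k = 0$ and $k = |V|$ contribute empty boundaries and so do not affect the argument.
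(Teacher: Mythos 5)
Your proof is correct and follows essentially the same route as the paper's: both arguments use initial-segment closure to show that the closure $C_G(r_{a+1})$ is a contiguous block of orders, which pins every neighbor of the segment's top element within $|B_G(r_{a+1})|$ positions, and then invoke isoperimetry of $r$ to identify $\max_k |B_G(r_k)|$ with the minimum over all orders. The only difference is presentational — you quantify over edges and choose the segment, while the paper quantifies over segments and bounds the incident edges.
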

\begin{proof}
For any $0 \le k \le |V|$, let $k' = k + |B_G(r_k)|$. Then $C_G(r_k) = r_{k'}$, and if $k>0$, any element of $B_G(r_k)$ adjacent to $r_k$'s largest element, $r^{-1}(k-1)$, is equal to $r^{-1}(k'')$ for some $k'' \le (k-1) + |B_G(r_k)|$. Thus the order bandwidth of every edge is bounded above by some $|B_G(r_{k''})|$. Given any other order $r'$, $|B_G(r'_{k''})|\ge | B_G(r_{k''})|$ and thus equality holds. 
\end{proof}

Similar arguments show that in general
\begin{equation}\label{two_bandwidth_inequality}
    b^2(G) \ge \min_{r \in R(V)} \max_{k \le v} | B_G(r_k) \cup B_G(C_G(r_k)) |.
\end{equation}
\begin{lemma}
In Inequality~\ref{two_bandwidth_inequality}, if $r$ has initial-segment closure, equality holds.
\end{lemma}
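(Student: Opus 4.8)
The plan is to mirror the proof of Lemma~\ref{lmm:two_bandwidth}, taking $r$ to be an isoperimetric order with initial-segment closure (the kind of order supplied by the exact-bandwidth results cited above). Inequality~\ref{two_bandwidth_inequality} already gives one direction, so the whole task is to produce, using this particular $r$, the matching upper bound $b^2(G) \le \max_k |B_G(r_k) \cup B_G(C_G(r_k))|$, and then to verify that this $r$ actually realizes the minimum on the right-hand side. The initial-segment closure will deliver the upper bound, while the isoperimetric property is what pins down the minimizer.

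For the upper bound I would first record the chain of closures: by initial-segment closure $C_G(r_k) = r_{k'}$ with $k' = k + |B_G(r_k)|$, and $C_G(r_{k'}) = r_{k''}$ with $k'' = k' + |B_G(C_G(r_k))|$, so that $B_G(r_k)$ and $B_G(C_G(r_k))$ occupy the consecutive position blocks $[k,k')$ and $[k',k'')$ and, being disjoint, satisfy $k'' - k = |B_G(r_k) \cup B_G(C_G(r_k))|$. Then, given any length-two line subgraph $S$ with center $b$ and endpoints $a,c$, I set $k-1$ to be the least of the three orders $r(a),r(b),r(c)$. If the center $b$ is lowest, both $a$ and $c$ are neighbors of $b \in r_k$ lying outside $r_k$, hence in $B_G(r_k) \subseteq r_{k'}$. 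If an endpoint (say $a$, by the symmetry of the path) is lowest, then $b$ is a neighbor of $a \in r_k$, so $b \in B_G(r_k) \subseteq r_{k'}$, and $c$, being a neighbor of $b \in C_G(r_k)$, lies in $C_G(C_G(r_k)) = r_{k''}$. In either case all three vertices sit in $r_{k''}$, so $b_r(V_S) \le (k''-1)-(k-1) = |B_G(r_k) \cup B_G(C_G(r_k))|$. Maximizing over $S$ and using $b^2(G) \le \max_S b_r(V_S)$ gives the upper bound.

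The step I expect to be the crux is verifying that this $r$ minimizes $|B_G(r_k) \cup B_G(C_G(r_k))|$ at every $k$, which is what converts the upper bound into equality. Rewriting $|B_G(W) \cup B_G(C_G(W))| = |C_G(C_G(W))| - |W|$ (again by disjointness), this reduces to an \emph{iterated} isoperimetric claim: initial segments of $r$ minimize the distance-two ball $|C_G(C_G(\cdot))|$ among all sets of a given size. I would prove this through the isoperimetric profile $\theta(m) = |C_G(r_m)| = \min_{|W|=m}|C_G(W)|$, noting that $\theta$ is non-decreasing (since $A \subseteq A'$ forces $C_G(A) \subseteq C_G(A')$), and then composing: for $|W| = k$, $|C_G(C_G(W))| \ge \theta(|C_G(W)|) \ge \theta(\theta(k)) = \theta(k') = |C_G(r_{k'})| = k''$, where the middle step uses $|C_G(W)| \ge \theta(k)$ with monotonicity and the last equalities use initial-segment closure. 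Hence $|B_G(r_k)\cup B_G(C_G(r_k))| \le |B_G(r'_k)\cup B_G(C_G(r'_k))|$ for every order $r'$ and every $k$, so the maximum over $k$ at $r$ equals the right-hand minimum. Chaining Inequality~\ref{two_bandwidth_inequality}, the upper bound from the second paragraph, and this minimality collapses the chain to equality.
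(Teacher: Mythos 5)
Your proof is correct and follows the same overall strategy as the paper's: use initial-segment closure to show every length-two line subgraph has order bandwidth at most $|B_G(r_k) \cup B_G(C_G(r_k))|$ (splitting into the case where the path's center vertex is lowest-ordered versus the case where an endpoint is), then invoke the isoperimetric property of $r$ to conclude that $r$ attains the minimum on the right-hand side of Inequality~\ref{two_bandwidth_inequality}. The real difference is in that last step. The paper's proof simply declares itself ``similar to'' Lemma~\ref{lmm:two_bandwidth}, whose final line is the pointwise comparison $|B_G(r'_k)| \ge |B_G(r_k)|$ --- a direct restatement of the isoperimetric hypothesis. For the present lemma, however, the analogous comparison needed is $|B_G(r'_k) \cup B_G(C_G(r'_k))| \ge |B_G(r_k) \cup B_G(C_G(r_k))|$, which is \emph{not} the isoperimetric property itself but an iterated consequence of it; the paper leaves this entirely implicit. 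Your third paragraph supplies exactly this missing step: rewriting the quantity as $|C_G(C_G(W))| - |W|$ via the disjointness of $W$, $B_G(W)$, and $B_G(C_G(W))$, then composing the profile $\theta(m) = |C_G(r_m)|$ with itself, using its monotonicity together with initial-segment closure to get $|C_G(C_G(W))| \ge \theta(\theta(k)) = k''$ for every $W$ of size $k$. That composition argument is the genuinely nontrivial content of the lemma, and your write-up is more complete than the paper's on this point. One small remark: as you correctly inferred, the hypothesis must be read as ``$r$ is an isoperimetric order with initial-segment closure'' (matching Lemma~\ref{lmm:two_bandwidth}); initial-segment closure alone yields only the upper-bound direction, and your proof properly uses both properties.
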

\begin{proof}
The proof is similar to that of Lemma~\ref{lmm:two_bandwidth}; the only additional complexity is that the lowest-ordered vertex $v$ is the middle vertex for some line subgraphs, however, setting $k=r(v)+1$ the other two vertices lie in $r_{k + |B_G(r_k)|}$ and thus in $r_{k + |B_G(r_k) + B_G(C_G(r_k))|}$.
\end{proof}

\begin{proposition}[\cite{wangwang77}]
Let $0 \le M_1 \le ... \le M_n$, with each $M_i$ an integer. Let $G$ be the $M_1 \times \ldots \times M_n$ grid graph.
Let $r^*$ be the order given by
\[(x_1, \ldots, x_n) < (y_1, \ldots, y_n)\]
if either
\[\sum x_i < \sum y_i\]
or
\[\sum x_i = \sum y_i \text{ and } (x_1, \ldots, x_n) >_L (y_1, \ldots, y_n),\]
where $L$ is the lexicographic order. Then $r^*$ is an isoperimetric order on $G$.
\end{proposition}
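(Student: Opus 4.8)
The statement that $r^*$ is isoperimetric is equivalent to the assertion that for every $k$ the initial segment $r^*_k$ minimizes $|B_G(\cdot)|$ among all $k$-element subsets of $V$, since any $k$-subset occurs as the $k$-th initial segment of some order. I would begin by exploiting the fact that $G$ is bipartite with respect to the level function $\sigma(x)=\sum_i x_i$: every edge joins $L_\ell := \{x : \sigma(x)=\ell\}$ to $L_{\ell+1}$, so each level is independent and the neighbours of any vertex lie one level up or one level down. Because $r^*$ orders first by level and then reverse-lexicographically, each $r^*_k$ has the form $W = L_0 \cup \cdots \cup L_{c-1} \cup A$ with $A$ a reverse-lex-initial block of $L_c$; since every non-origin vertex has a down-neighbour, the boundary splits exactly as $B_G(W) = (L_c \setminus A) \cup \partial A$, where $\partial A \subseteq L_{c+1}$ is the upward shadow of $A$. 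For fixed $k$ this reduces optimality among segments to minimizing $|\partial A|$, a Kruskal--Katona-type shadow problem on the grid's level structure.

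To pass from arbitrary $k$-sets to this structured situation I would use compressions. For each coordinate $i$, define $C_i$ to replace, on every line in direction $i$, the occupied cells by an initial interval $\{0,1,\dots\}$ of that line. The key lemma is that $C_i$ never increases $|B_G(\cdot)|$; this is a local estimate checked line-by-line, bounding the boundary vertices created against those destroyed in the two levels the line touches. The monovariant $\sum_{x\in W}\sigma(x)$ strictly decreases under any nontrivial $C_i$, so iterating the compressions over all directions terminates at a set $W'$ with $|W'|=k$ and $|B_G(W')|\le |B_G(W)|$ that is a down-set (an order ideal: $x\in W'$ and $y\le x$ coordinatewise imply $y\in W'$). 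It therefore suffices to prove that among down-sets of size $k$ the minimum boundary is attained by $r^*_k$.

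This last point I would establish by induction on $n$. For $n=1$ a down-set is an interval $\{0,\dots,k-1\}$ and the claim is immediate. For the inductive step, slice $G$ along the largest coordinate $x_n$: a down-set $W'$ becomes a nested decreasing family $W'_0 \supseteq W'_1 \supseteq \cdots$ of down-sets in the $(n-1)$-dimensional grid $G'$, and $|B_G(W')|$ can be written in terms of the slice boundaries $|B_{G'}(W'_j)|$ together with the vertical contributions $|W'_j \setminus W'_{j+1}|$. Applying the inductive hypothesis to replace each slice by an $r^*$-segment of $G'$ of the same size, followed by a between-slices exchange that enforces the correct nesting and the reverse-lex filling of the top level, rearranges $W'$ into $r^*_k$ without increasing the boundary.

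The main obstacle is the combination of the compression lemma of the second paragraph with the between-slices exchange of the third. Verifying that $C_i$ cannot increase the vertex boundary requires a careful accounting of gains and losses in the two adjacent levels touched by each line, and the inductive step is delicate because minimizing the slice boundaries $|B_{G'}(W'_j)|$ and the vertical terms $|W'_j\setminus W'_{j+1}|$ pull in opposite directions, so one must show that the reverse-lex tie-break---not merely some arbitrary linear extension of the level order---is exactly what balances them. These two steps carry essentially all of the content; the bipartite decomposition and the reduction to down-sets are comparatively routine.
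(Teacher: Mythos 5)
First, a point of context: the paper itself contains no proof of this proposition --- it is imported wholesale from the cited reference \cite{wangwang77}, and the paper's only original remark about it is the follow-up observation that $r^*$ also has initial-segment closure. So your attempt must be judged against the literature proof it would replace, not against anything in the paper. With that said, your scaffolding is sound and standard. The reduction from orders to arbitrary $k$-subsets is correct, as is the decomposition $B_G(W) = (L_c \setminus A) \cup \partial A$ for sets consisting of full levels plus a partial level. Your one-dimensional compression lemma is in fact true and provable exactly as you propose, line by line: after compressing every line in direction $i$, the boundary lying on a given line has size $t - m$ when $t > m$ (where $m$ is that line's occupancy and $t$ the largest occupancy among adjacent lines), and size $0$ or $1$ otherwise; both quantities are lower bounds for the boundary lying on that line before compression, since the union of the neighboring lines' occupancy sets minus the line's own set has size at least $t-m$, and a nonempty proper subset of a path always has a within-line boundary vertex. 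Combined with your monovariant, this correctly reduces the problem to down-sets.

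The genuine gap is your third paragraph. Replacing each slice $W'_j$ by the $r^*$-segment of $G'$ of equal size can indeed be shown not to increase the boundary --- though this needs the strengthened inductive hypothesis that $r^*$ on $G'$ has initial-segment closure, so that $\bigl|B_{G'}(W'_j) \cup (W'_{j-1}\setminus W'_j)\bigr|$ can be compared against the corresponding union for nested segments --- but what this replacement produces is a nested stack of $(n-1)$-dimensional $r^*$-segments, which is generally \emph{not} $r^*_k$. Passing from such stacks to $r^*_k$ without increasing the boundary is the entire content of the theorem, and ``a between-slices exchange that enforces the correct nesting and the reverse-lex filling of the top level'' is a restatement of that goal, not a mechanism; naive cell-by-cell exchanges can increase the vertex boundary along the way, and the reverse-lex filling you invoke is itself a nontrivial shadow-minimization statement on grid levels (a Clements--Lindstr\"om, i.e.\ multiset Kruskal--Katona, theorem) that you would separately have to prove or cite. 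The failure is starkest in exactly the case the paper uses, $n=2$: there $G'$ is a path, every down-set is automatically a nested stack of intervals, so your slicing step is vacuous and the whole two-dimensional theorem rests on the unspecified exchange. This final comparison is where Wang and Wang (and, for the equal-sided grid, Bollob\'as and Leader) spend essentially all of their effort; you have correctly identified the hard point, but identifying it is not supplying it.
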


In addition, it is easily verified that $r^*$ has initial-segment closure. Many other graphs are known to admit isoperimetric orderings; see \cite{harper_2004} for an overview of what is known.

For any connected graph $G = (V,E)$ with order $r$ on $V$ we can partition $V$ into sets $V_i$ as follows:
\[\begin{array}{l}V_0 = r^{-1}(\{0\}), \\
V_i = B_G(V_0 \cup \ldots \cup V_{i-1})
\end{array}
\]

If $G$ is bipartite, no edge in $E$ can have both endpoints in the same $V_i$ (or else there would be two paths from $0$ of opposite parity), and conversely, if no $V_i$ contains the endpoints of an edge in $E$, the even and odd $V_i$ form a bipartite partition of $V$.

For a grid graph, every vertex $v$ except the first and the last under $r^*$ has edges to smaller vertices and to greater vertices. The length-two path $P$ through $v$ which maximizes order bandwidth $b_{r^*}(P)$ is that from the least neighbor of $v$ to the greatest.

For the case that $G$ is an $M \times N$ grid graph, $M \le N$, inspection reveals that $|B_G(r^*_k)| = 2$ when $k = 1$, increases by one as the first element $(i,0)$ in each $V_i$ is reached, for $i \le M-2$, at which the point at $(M-2,0)$ the maximum is attained. $|B_G(r^*_k)|$ then remains constant until the last element of $V_{N-1}$, $(0, N-1)$ is filled, at which point it decreases by one upon completion of each $V_i$. Since $M \le N$, $|B_G(r^*_k)|$ remains maximal for at least $2M - 1$ steps. Applying Lemma~\ref{twobandwidthbound}, we obtain the following:
\begin{equation}\label{eq:spinless-swap-bound}
    \begin{array}{l}
     b^2(G) = 2M, \\
     \swapdepth{G} \ge M-1.
     \end{array}
\end{equation}

For higher-dimensional grid graph cases, these quantities can be computed using the structure of the $V_i$. The set of points
\[\{(x_1, \ldots, x_n)| x_i \in \mathbb N, \sum x_i = k\}\]
forms an {\em $(n-1)$-simplicial grid of size $k$} (when $k=1$ it is just the standard $(n-1)$-simplex). Such coordinates might not all be valid on our grid graph; for each $M_i$ one must remove the $(n-1)$-simplicial grid of size $x_i - M_i$, with the convention that grids of negative size are empty, consisting of points not lying on the grid graph because $x_i$ is too large.

In the $2 \times M \times N$ case, $r^*$ partitions $V$ into the $V_i$, with the spin-coordinate-one vertices in each $V_i$ prior to all of the the spin-coordinate-zero vertices. It is easy to see that $|B_G(r^*)|$ increases monotonically until reaching a maximum of $2M$, upon reaching $(1, M-2, 0)$ in $V_{M-1}$. This maximum is maintained until reaching $(0, 0, N-1)$, in $2M - 1$ more steps if $M = N$, greater than $2M$ steps otherwise, at which point $|B_G(r^*)|$ decreases by one and remains stable for at least $M-1$ steps.

Applying Lemma~\ref{twobandwidthbound}, we obtain the following:
\begin{equation}\label{eq:spin-swap-bound}
    \begin{array}{l}
     b^2(G) = \left\{
        \begin{array}{ll}
            4M & \text{if } M <N, \\
            4M - 1 & \text{if } M=N.
        \end{array}
    \right.\\
     \swapdepth{G} \ge 2M-1.
     \end{array}
\end{equation}

\subsection{Hubbard model networks}

Our strategy for swap networks on grid graphs is as follows. Suppose $(G,E)$ is a grid graph of dimension $n$.  Each $V_i$ consists of a concatenated sequence of sequences, which we shall call $\sigma$-rows, each of the form:
\[\sigma^i_{(x_1, \ldots, x_{n-2})} = (k, 0, x_1, \ldots, x_{n-2}), (k-1, 1, x_1, \ldots, x_{n-2}) \ldots, (0, k, x_1, \ldots, x_{n-2}).\]
For each $\sigma$-row $\sigma^i_{(x_1, x_2, \ldots, x_{n-2})}$, the $k$-th element must interact with the $k$-th element of each of the following sequences for $j \in \pm 1$, provided it exists:
\[\sigma^{i+j}_{(x_1 + j, x_2, \ldots, x_{n-2})}, \sigma^{i+j}_{(x_1, x_2 + j, \ldots, x_{n-2})}, \ldots \sigma^{i+j}_{(x_1, x_2, \ldots, x_{n-2} + j)}.\]
Additionally, for $j \in \pm 1$, $\sigma^i_{(x_1, x_2, \ldots, x_{n-2})}$ must interact with the $k$-th and $k+j$-th elements of each $\sigma^{i+j}_{(x_1, x_2, \ldots, x_{n-2})}$, whenever one exists.

The difference in length between the original $\sigma$-row and any of of its neighbor rows is at most one.

Let $r^o$ be the order on the bipartite partition $\bigcup V_{2i+1}$ which places the $V_{2i+1}$ in ascending order, each ordered as a sequence of $\sigma$-rows, in reverse lexicographical order by index. Define $r^e$ similarly on $\bigcup V_{2i}$.

 Interlace the two orders $r^o$ and $r^e$ so that the elements of $r^o$ have the rightmost possible placements such that elements of $r^o$ maintain relative order and each element of $\sigma^{2i+1}_{(x_1, \ldots, x_{n-2})}$ is adjacent to its (up to) two neighbors in $\sigma^{2i}_{(x_1, \ldots, x_{n-2})}$, if it exists. An interaction layer is performed. The $\sigma^{2i+1}_{(\ldots)}$ are then shifted upward in $2n - 1$ stages to be brought adjacent to each of the remaining neighbor $\sigma$-rows, performing an interaction layer at each stage.

 Our lower bounds on swap depth are not strict for higher-dimensional grid graphs. In general, not every vertex in $r^o$ can be shifted relative to $r^e$ at each time step. When elements of $r^o$ are adjacent to each other they must queue up to swap with the next element of $r^e$, increasing the transit time for a given $v \in r^o$. While the approach seems fairly efficient, it is not clear that it is optimal and as the dimension increases, local Hamiltonian methods become more attractive.

\begin{theorem}

For the spinless and spin $M \times N$ Hubbard model Hamiltonians, the above swap-depth bounds are realizable by the above swap network scheme. In the spinless case, interaction-depth optimality is simultaneously achievable. In the spin case, one extra interaction layer is required.
\end{theorem}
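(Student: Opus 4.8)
The plan is to treat the two assertions of the theorem separately. Realizing the swap-depth bounds of Equations~\ref{eq:spinless-swap-bound} and~\ref{eq:spin-swap-bound} as matching \emph{upper} bounds is the main work; the interaction-depth statements then follow by counting stages, together with one impossibility argument in the spin case. Since the lower bounds on swap depth are already established via Lemma~\ref{twobandwidthbound} and $b^2(G)$, and the degree already lower-bounds interaction depth, all the content lies in the explicit schedule. Concretely, in each case I would (i) fix the interleaving of $r^e$ and $r^o$ and the shifting schedule described above, (ii) verify \emph{correctness}, that every grid edge — and for the spin model every spin-partner and hopping edge — is brought to adjacency at exactly one of the interaction stages, by matching the neighbor list of each $\sigma$-row element against the stage at which the relevant rows become adjacent, and (iii) count swap and interaction layers and compare to the bounds.

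For the swap-depth realization, which is the core, I would track a single element of $r^o$ as it is shifted through $r^e$. The enabling structural facts are those already recorded: consecutive $\sigma$-rows differ in length by at most one, and the interactions of the $k$-th element of a $\sigma$-row lie at row-index offset $0$ or $\pm 1$ in the neighboring rows. Because of this, once $r^o$ and $r^e$ are interleaved by the diagonal order $r^*$ the two parities very nearly alternate along the line, so each odd-level element must traverse only the even-level elements of the two adjacent diagonals it still needs to meet. I would show this traversal costs at most $M-1$ (resp.\ $2M-1$) swaps per element and that the brick-wall shifts realize all of them in that many layers. The main obstacle is precisely the queueing phenomenon flagged in the text: when several $r^o$ elements bunch together they cannot all advance in one layer. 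I expect the key lemma to be that in dimensions two and three the backlog never exceeds what the critical length-two path of order-bandwidth $b^2(G)$ already forces — the at-most-one difference in length between adjacent $\sigma$-rows makes the diagonals slide coherently, so that one path is the binding constraint and every other adjacency is produced as a by-product within the same $M-1$ (resp.\ $2M-1$) layers. Making this coherence precise as a collision-free scheduling statement is where I would spend most of the effort.

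For interaction depth I would count stages directly: the initial interleaving contributes one interaction layer and the shifting proceeds in $2n-1$ stages, each contributing one, for $2n$ total. In the spinless case $n=2$ gives $4$, which equals the degree of the $M\times N$ grid and hence the interaction-depth lower bound; since these four layers are interleaved with the same $M-1$ swap layers, swap- and interaction-optimality hold simultaneously. In the spin case $n=3$ gives $6$, one more than the degree bound of $5$ for the $2\times M\times N$ grid. To justify that this extra layer is genuinely forced I would argue as in the proof of Lemma~1: interaction depth $5$ would require every degree-$5$ vertex to interact in all five layers, so each layer is a matching saturating those vertices, and combined with swap-optimality a counting argument on the available partners of a vertex — which has a \emph{unique} spin partner along the size-two axis — yields a contradiction. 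Here the odd maximum degree $5$ plays the role that the even parameter does for $K_n$ in Lemma~1, and the obstruction is that the ``wasted'' second spin-direction stage cannot be reclaimed without violating either an adjacency requirement or swap-optimality. I expect this impossibility argument, rather than the stage count, to be the delicate part of the interaction-depth analysis.
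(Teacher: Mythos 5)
Your overall architecture matches the paper's: realize the lower bounds of Equations~\ref{eq:spinless-swap-bound} and~\ref{eq:spin-swap-bound} by the explicit shifting schedule, then count interaction layers ($1 + (2n-1) = 2n$ in total), getting $4$ in the spinless case (matching the degree bound) and $6$ in the spin case (versus the degree bound of $5$). But two gaps remain. The first is that the queueing analysis you defer (``where I would spend most of the effort'') is precisely the substance of the paper's proof, and the paper resolves it not with a general collision-free scheduling lemma but with two case-specific observations: in the spinless case each $V_i$ is a single $\sigma$-row, and queueing between rows can add an extra stage only when $|V_{2i+3}| > |V_{2i+2}|$, in which case the shifted row needs fewer than $M-1$ swaps, so the total never exceeds $M-1$; in the spin case the $\sigma$-rows have length at most two, all but possibly the first and last in each $V_i$ have length exactly two, and each $\sigma$-row interacts with two rows adjacent to each other in the bipartite order, so the $V_i$ shift as rigid rows with no internal queueing, a queue step occurring only when fewer than $2M-1$ swaps were needed. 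Until you supply these (or equivalent) observations, the claim that the critical path is the binding constraint is asserted rather than proved.

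The second, more substantive issue is that you misread the spin-case claim and set yourself a task the paper never attempts. In the paper, ``one extra interaction layer is required'' means only that the scheme uses $6$ layers while the degree lower bound is $5$; the proof is exactly your stage count plus the remark that the $2 \times M \times N$ grid has maximum degree $5$ rather than $6$. Your proposed impossibility argument (that interaction depth $5$ is unattainable, at least for swap-depth-optimal networks) is not needed, and as sketched it does not go through: unlike $K_n$ in the dense one-body lemma, the interaction graph here is bipartite with many vertices of degree less than $5$, so ``each layer is a matching saturating the degree-$5$ vertices'' does not yield a clean positional counting argument, and the contradiction you gesture at is never derived. Note also that without the swap-optimality hypothesis the stronger claim is simply false: the interaction graph is bipartite, so by K\"onig's theorem its edges decompose into $5$ matchings, and with unbounded swap depth any such decomposition can be realized as five interaction layers. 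Any genuine impossibility proof would therefore have to use the swap-depth constraint in an essential way that your sketch does not supply.
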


\begin{proof}
For the spinless case, each $V_i$ contains a single $\sigma$-row. It requires at most $M-1$ swaps on each vertex to shift each of the $V_{2i+1}$ from its initial to its final position. There may at most one extra stage due to queueing between the rows if $|V_{2i+3}| > |V_{2i+2}|$, however in this case the number of swaps is smaller than $M-1$. Thus the required swap depth is $M-1$. By Equation~\ref{eq:spinless-swap-bound}, the network is swap-depth optimal. See Figure~\ref{fig:hubbard-order} for an example.

For the Hamiltonian with spin, each $V_i$ consists of $\sigma$-rows of length at most two. All but possibly the first and last rows have length exactly two, so gaps can form only at the beginning and end  of each $V_i$. Each $\sigma$-row interacts with two others, which are adjacent to each other in their bipartite ordering. As a result, to perform the stages, the $V_i$ are just shifted as rows without any queueing within the $V_i$ (see Figure~\ref{fig:spin-hubbard-order} for an illustration). Such a shift requires at most $2M - 1$ swaps, with a single queue step occurring only when fewer than $2M - 1$ swaps occurred. Thus the required swap depth is $2M - 1$. By Equation~\ref{eq:spin-swap-bound}, the network is swap-depth optimal. The network minimizes three-dimensional grid interaction depth, which is six, but since the grid has $M_1 = 2$, the degree of each vertex is only five.

\end{proof} 

\begin{figure}
    \centering
    \includegraphics[width=5in]{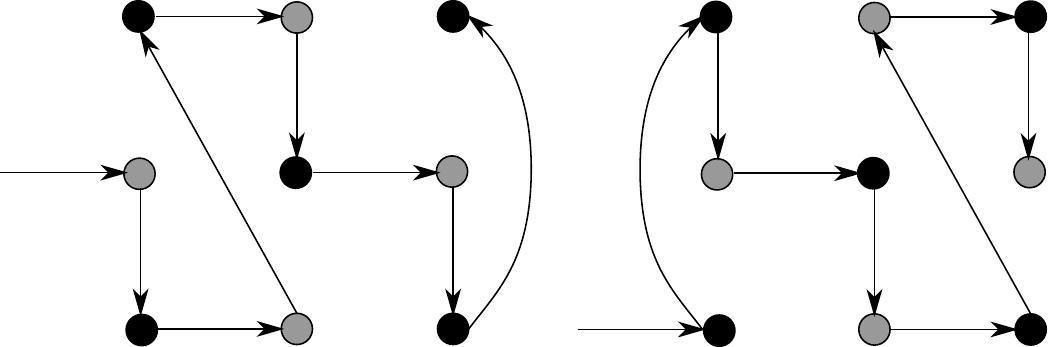}
    \caption{The initial (left) and final (right) orders for a $3 \times 3$ spinless Hubbard model.}
    \label{fig:hubbard-order}
\end{figure}

\begin{figure}
    \centering
    \includegraphics[width=3in]{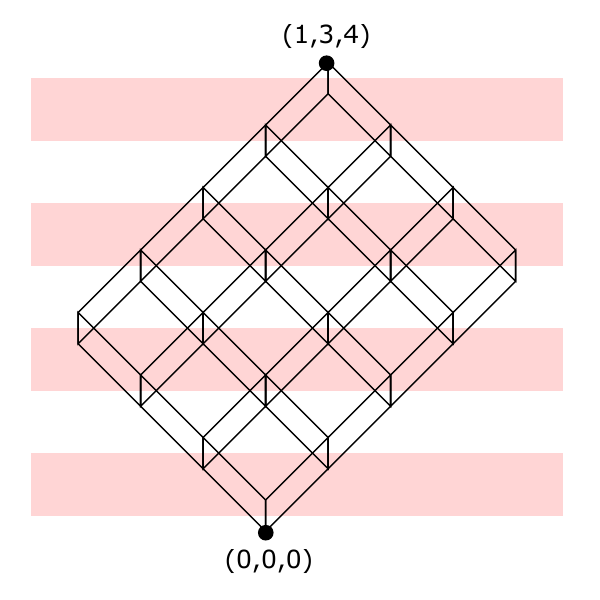}
    \caption{An isoperimetric ordering for the Hubbard model is given by ordering the above vertices bottom to top, left to right. For odd $i$, the $V_i$ are shaded. Within each $V_i$, the order is reverse-lexicographic.}
    \label{fig:spin-hubbard-order}
\end{figure}

\section{Generalizations}
The above machinery applies straightforwardly to any bipartite graph which admits an isoperimetric ordering. Since isoperimetric orders are orders attainable using a greedy strategy, it is often easy to guess them for lattice graphs, and there is also a literature (see e.g. \cite{global_isoperimetric}). For lattice graphs which are not bipartite we are not guaranteed that interacting pairs within a single $V_i$ will be brought to adjacency, however, it sometimes happens. For example, if, in the spinless Hubbard model, using an isoperimetric ordering, edges are added between the adjacent elements of each $V_i$, one obtains a triangular lattice. These new edges come to adjacency in the spinless Hubbard swap network, and interactions could be performed. The result has minimal swap depth (the same proof goes through) but not minimal interaction depth.

\begin{conjecture}
  Given a bipartite interaction graph with an isoperimetric ordering, interleaving the bipitartite layers as above produces a swap network with the minimum number of entangling gates.
\end{conjecture}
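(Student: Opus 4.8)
The plan is to split the entangling-gate count into a fixed interaction cost and a variable routing cost, and then to push the isoperimetric machinery of the previous subsection from the \emph{maximum} boundary size (which controls swap depth) up to a \emph{summed} boundary size (which controls total gate count). First I would reduce the conjecture to a pure swap-counting statement. Every edge of the interaction graph $G = (V,E)$ must be realized by at least one two-qubit gate, so any network performs at least $|E|$ interaction operations. The essential role of the bipartite hypothesis is that, in the interleaving scheme, each interacting pair is brought to adjacency \emph{precisely} by a color-crossing swap: when a vertex of an odd layer $V_{2i+1}$ passes a neighbor in an even layer, the Fermionic swap and the Hamiltonian term are performed as one fused gate. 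Hence every interaction can be absorbed into a routing swap, and the total number of entangling gates equals $|E| + P$, where $P$ is the number of \emph{pure} swaps, i.e. adjacent transpositions of pairs that are either non-neighbors in $G$ or have already interacted. Since $|E|$ is fixed, the conjecture is equivalent to the claim that the interleaving scheme on the isoperimetric order $r^*$ minimizes $P$ over all starting orders and all swap schedules. (This also recovers the dense case: there $P = 0$, so the count is exactly $|E| = \binom{n}{2}$.)

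Second, for the upper bound I would compute $P$ for the scheme explicitly. As each odd layer $V_{2i+1}$ is shifted through the even layers, its members cross both their neighbors (fused, not counted in $P$) and the intervening non-neighbors (counted); the latter count is governed by the layer sizes $|V_j|$ and the boundary structure. I expect a closed form in terms of $\sum_k |B_G(r^*_k)|$, generalizing the single-cut displacement computations behind Equations~\ref{eq:spinless-swap-bound} and~\ref{eq:spin-swap-bound} from the maximum displacement to the aggregate displacement.

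Third, and this is the heart, I would establish a summed analogue of Inequalities~\ref{boundary_bandwidth_inequality} and~\ref{two_bandwidth_inequality}: that any network realizing all edges from a starting order $r$ incurs at least $\sum_k h\bigl(|B_G(r_k)|\bigr)$ pure swaps, for an appropriate per-cut overhead $h$, via a charging argument that attributes to each cut of the linear order the non-neighbor crossings forced to bring that cut's boundary to adjacency. The crucial leverage is that $r^*$ is isoperimetric: by definition $|B_G(r^*_k)| \le |B_G(r'_k)|$ for every $k$ and every competitor $r'$, so $r^*$ minimizes the per-cut overhead simultaneously at \emph{every} cut and therefore minimizes any monotone sum of them. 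Combined with initial-segment closure, exactly as exploited in Lemma~\ref{lmm:two_bandwidth}, this should certify $r^*$ together with the interleaving schedule as a global minimizer of $P$.

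The main obstacle I anticipate is making the charging argument tight rather than merely order-optimal. Swaps are shared among many cuts and the network may reorder adaptively, so a naive sum of per-cut requirements overcounts; the difficulty is to define a potential whose total decrease over the whole schedule equals precisely the scheme's value of $P$. A related subtlety is the queueing overhead noted for higher-dimensional grids: when several odd-layer vertices become mutually adjacent they must wait to cross, so the scheme's $P$ may exceed the raw boundary sum, and one must show either that this excess vanishes for the bipartite families of interest or that it is itself forced on every network. For this reason I would expect the clean cases, namely the two-dimensional spin and spinless Hubbard families where depth optimality already holds by Equations~\ref{eq:spinless-swap-bound} and~\ref{eq:spin-swap-bound}, to yield first, with the fully general bipartite statement requiring control of the queueing term --- which is precisely why it is posed here as a conjecture rather than a theorem.
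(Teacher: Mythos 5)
This statement is a \emph{conjecture} in the paper; the paper offers no proof of it, so there is nothing to compare your attempt against except the paper's own machinery, and your proposal is a research plan rather than a proof --- with the decisive step missing, as you yourself concede. The gap is your third step. The isoperimetric property of $r^*$ says that $|B_G(r^*_k)| \le |B_G(r'_k)|$ for every $k$, but only when comparing \emph{static} orders. A swap network reorders the qubits at every layer, so the cuts of the initial order lose control of the configuration as soon as swapping begins; a lower bound of the form $\sum_k h\bigl(|B_G(r_k)|\bigr)$ on pure swaps, charged to cuts of the starting order, has no justification for an adaptively reordering schedule. Note that the paper's own lower bounds (Inequality~\ref{bandwidthbound}, Lemma~\ref{twobandwidthbound}, Inequalities~\ref{boundary_bandwidth_inequality} and~\ref{two_bandwidth_inequality}) evade this problem precisely because they track a single worst edge or length-two path, whose endpoints must physically traverse the intervening positions regardless of what else happens; that argument bounds \emph{depth} and does not sum across cuts, since distinct cuts' requirements are served by the same swaps. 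Your proposal names this obstacle (``the difficulty is to define a potential whose total decrease $\ldots$ equals precisely the scheme's value of $P$'') but supplies no such potential, and phrases like ``I expect a closed form'' and ``this should certify'' sit exactly where the proof has to be. The open content of the conjecture is this missing step.

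Two further points would need repair even if the charging argument existed. First, your accounting identity (entangling gates $= |E| + P$ with every interaction fused into a routing swap) does not match the scheme as described in the paper: there, interaction layers are performed as separate stages --- including one immediately after interlacing, before any swap occurs --- and Section~2 makes explicit that absorbing interactions into swap layers is a hardware-dependent possibility, not a structural fact. Second, gate-count optimality is qualitatively harder than the depth optimality the paper actually proves: even for the dense one-body Hamiltonian, minimality of the entangling-gate count is itself only a conjecture of the authors of \cite{kivlichan2018quantum}, and the complete graph is not bipartite, so your claim that the reduction ``recovers the dense case'' conflates matching the gate count of that scheme with proving its optimality. Your instinct that the two-dimensional Hubbard cases should yield first is reasonable, but as it stands the proposal establishes neither the upper-bound bookkeeping nor the lower bound, and so does not prove the conjecture.
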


There is another potential interesting generalization to consider. At a certain scaling, local Hamiltonian methods become more efficient than fermionic swap methods, assuming quantum hardware supports the local fermionic Hamiltonian connectivity. However, real-hardware connectivity is limited and it may not be feasible to build custom hardware for each Hamiltonian interaction graph to be simulated. In this circumstance, it makes sense for locality of local fermionic Hamiltonian methods to match the connectivity of the hardware, rather than the connectivity of the Hamiltonian. This makes the Hamiltonian nonlocal, but it is still ``more local'' than it would be under the Jordan-Wigner encoding.

Suppose for example the hardware admits local fermionic operations on a square lattice. The graph bandwidth problem can be thought of as finding the vertex mapping from the Hamiltonian interaction graph to a path graph which minimizes maximal graph distance between Hamiltonian-graph-adjacent vertices. If the path graph is replaced with the square lattice, the graph bandwidth problem becomes the {\em quadratic bottleneck assignment problem}. Like the graph bandwidth problem, it is NP-hard, and the problem needs to be generalized (just as the graph bandwidth problem does) in order to discover optimal swap networks. Nevertheless, there may be cases with enough structure that a provably optimum solution is possible.

\bibliographystyle{abbrv}
\bibliography{bib/isoperimetric,bib/tobe}

\begin{thebibliography}{10}

\bibitem{abrams-lloyd-1997}
D.~S. {Abrams} and S.~{Lloyd}.
\newblock {Simulation of Many-Body Fermi Systems on a Universal Quantum
  Computer}.
\newblock {\em Phys. Rev. Lett.}, 79(13):2586--2589, Sep 1997.

\bibitem{ball}
R.~C. {Ball}.
\newblock {Fermions without Fermion Fields}.
\newblock {\em Physical Review Letters}, 95(17):176407, Oct. 2005.

\bibitem{baranyai}
Z.~Baranyai.
\newblock On the factorization of the complete uniform hypergraph.
\newblock In {\em Infinite and finite sets : to Paul Erdös on his 60th
  birthday / edited by A. Hajnal, R. Rado and Vera T. Sós.}, Colloquia
  mathematica Societatis János Bolyai ; 10, Amsterdam, 1975. Colloquium on
  Infinite and Finite Sets, Keszthely, Hungary, 1973., North-Holland Pub. Co.

\bibitem{bravyi-kitaev}
S.~B. Bravyi and A.~Y. Kitaev.
\newblock Fermionic quantum computation.
\newblock {\em Annals of Physics}, 298(1):210–226, May 2002.

\bibitem{Chvatalova1975249}
J.~Chvátalová.
\newblock Optimal labelling of a product of two paths.
\newblock {\em Discrete Mathematics}, 11(3):249 -- 253, 1975.

\bibitem{compact}
C.~Derby, J.~Klassen, J.~Bausch, and T.~Cubitt.
\newblock Compact fermion to qubit mappings.
\newblock {\em Phys. Rev. B}, 104:035118, Jul 2021.

\bibitem{Feynman1982}
R.~P. Feynman.
\newblock Simulating physics with computers.
\newblock {\em International Journal of Theoretical Physics}, 21(6):467--488,
  Jun 1982.

\bibitem{harper_2004}
L.~H. Harper.
\newblock {\em Global Methods for Combinatorial Isoperimetric Problems}.
\newblock Cambridge Studies in Advanced Mathematics. Cambridge University
  Press, 2004.

\bibitem{global_isoperimetric}
L.~H. L.~H. Harper.
\newblock {\em Global methods for combinatorial isoperimetric problems / L.H.
  Harper.}
\newblock Cambridge studies in advanced mathematics ; .90. Cambridge Univeristy
  Press, 2004.

\bibitem{Kitaev_1997}
A.~Y. Kitaev.
\newblock Quantum computations: algorithms and error correction.
\newblock {\em Russian Mathematical Surveys}, 52(6):1191--1249, dec 1997.

\bibitem{kivlichan2018quantum}
I.~D. Kivlichan, J.~McClean, N.~Wiebe, C.~Gidney, A.~Aspuru-Guzik, G.~K.-L.
  Chan, and R.~Babbush.
\newblock Quantum simulation of electronic structure with linear depth and
  connectivity.
\newblock {\em Phys. Rev. Lett.}, 120:110501, Mar 2018.

\bibitem{seeley2012bravyi-kitaev}
J.~T. Seeley, M.~J. Richard, and P.~J. Love.
\newblock The bravyi-kitaev transformation for quantum computation of
  electronic structure.
\newblock {\em The Journal of Chemical Physics}, 137(22):224109, 2012.

\bibitem{gse}
K.~{Setia}, S.~{Bravyi}, A.~{Mezzacapo}, and J.~D. {Whitfield}.
\newblock {Superfast encodings for fermionic quantum simulation}.
\newblock {\em Physical Review Research}, 1(3):033033, Oct. 2019.

\bibitem{Verstraete_2005}
F.~Verstraete and J.~I. Cirac.
\newblock Mapping local hamiltonians of fermions to local hamiltonians of
  spins.
\newblock {\em Journal of Statistical Mechanics: Theory and Experiment},
  2005(09):P09012–P09012, Sep 2005.

\bibitem{wangwang77}
D.-L. Wang and P.~Wang.
\newblock Extremal configurations on a discrete torus and a generalization of
  the generalized macaulay theorem.
\newblock {\em SIAM Journal on Applied Mathematics}, 33(1):55--5, 07 1977.
\newblock Copyright - Copyright] © 1977 Society for Industrial and Applied
  Mathematics; Last updated - 2012-07-02.

\end{thebibliography}

\end{document}